\newtheorem{theorem}{Theorem}[section]
\newtheorem{lemma}[theorem]{Lemma}
\theoremstyle{definition}
\numberwithin{equation}{section}
\begin{document}

% \title[short text for running head]{full title}
\title{Existence of solution to an evolution equation and\\
a justification of the DSM for equations with monotone operators}

% \author[short version for running head]{name for top of paper}
%    author two information

%\thanks{}

\author{N. S. Hoang}
\address{Mathematics Department, Kansas State University,
Manhattan, KS 66506-2602, USA}
\curraddr{}
\email{nguyenhs@math.ksu.edu}
\thanks{}

\author{A. G. Ramm}
\address{Mathematics Department, Kansas State University,
Manhattan, KS 66506-2602, USA}
\curraddr{}
\email{ramm@math.ksu.edu}

%    \subjclass is required.
\subjclass[2000]{Primary 47J05; 47J06; 47J35}
%    The 2010 edition of the Mathematics Subject Classification is
%    now available.  If you are citing a classification from the
%    new scheme, use the following input coding instead.
%\subjclass[2010]{Primary }

\date{}

\dedicatory{}

\keywords{Dynamical systems method (DSM),
nonlinear operator equations, monotone operators.}

%\commby{}

\begin{abstract} An evolution equation, arising 
in the study of the Dynamical Systems Method (DSM) for solving equations 
with 
monotone operators, is studied in this paper. The evolution equation is 
a continuous analog of the regularized Newton method for solving ill-posed 
problems with monotone nonlinear operators $F$. Local and global 
existence of the 
unique solution to this evolution equation are proved, apparently for the 
firs time, under  
the only assumption that $F'(u)$ exists and is continuous with respect to 
$u$.
The earlier published results required more smoothness of $F$.
The Dynamical Systems method (DSM) for solving equations $F(u)=0$ with
monotone Fr\'echet differentiable operator $F$ 
is justified under the above assumption apparently for the first time. 

\end{abstract}

\maketitle

\section{Introduction}
The Dynamical Systems Method (DSM) for solving an operator equation
$F(u)=f$ in a Hilbert space consists of finding a nonlinear map 
$\Phi(u,t)$ such that the Cauchy problem 
$$\dot{u}=\Phi(t,u),\quad u(0)=u_0;\qquad
\dot{u}:=\frac{du}{dt},$$
has a unique global solution, there exists $\lim_{t\to 
\infty}u(t):=u(\infty), $ and $F(u(\infty))=f$  (see \cite{R499}).
Here $u_0\in H$ is an arbitrary element, possibly belonging to
a bounded subset of $H$.
  
One of the versions of the DSM (\cite{R499})
for solving nonlinear operator equation $F(u)=f$ with monotone operator
$F$ in a Hilbert space is based on a regularized continuous analog of the
Newton method, which consists of solving the 
following Cauchy problem \begin{equation} \label{eq1} 
\dot{u} = -\big{(}F'(u) + a(t)I\big{)}^{-1}\big{(}F(u)+a(t)u - f\big{)}, 
\quad u(0)=u_0. \end{equation} 
Here $F:H\to H$ is a monotone continuously Fr\'echet 
differentiable operator in 
a Hilbert space $H$, $u_0$ and $f$ in $H$ are arbitrary, and $a(t)>0$ is 
a continuously differentiable function, defined for all $t\geq 0$ and 
decaying to zero as $t\to \infty$. This function is a regularizing 
function: if $F'(u)$ is not a boundedly invertible operator, and $f$ is 
monotone, then $F'(u)\geq 0$ and the operator $F'(u)+a(t)I$ is boundedly 
invertible if $a(t)>0$. By $I$ the identity operator is denoted, and 
by $\langle u,v\rangle$ we denote the inner product in $H$. Monotonicity 
of $F$ is understood as follows 
\begin{equation} \label{eqx12} 
\langle F(u) - 
F(w), u - w\rangle \ge 0,\qquad \forall u,w\in H. 
\end{equation} 
%The Cauchy problem \eqref{eq1} is used in the study of the Dynamical 
%Systems 
%Method for solving equations with monotone operators (see, e.g, 
%\cite{R499}, \cite{R574}).
The DSM is a basis for developing efficient numerical methods
for solving operator equations, both linear and nonlinear, especially 
when the problems are ill-posed, for example, when $F'(u)$ is not a 
boundedly invertible operator (see \cite{R499}, \cite{R574}).

If one has a general evolution problem with a nonlinear operator
in a Hilbert (or Banach) space
\begin{equation} \label{eqgen}
\dot{u} = B(u), \qquad u(0)=u_0,
 \end{equation}
then the local existence of the solution to this   
problem is usually established by assuming that 
$B(u)$ satisfies a Lipschitz condition, and the global existence is 
usually established by proving a uniform bound on the 
solution:
\begin{equation} \label{eqbound} 
\sup_{t\geq 0}||u(t)||<c,
 \end{equation}
where $c>0$ is a constant.

In \eqref{eq1} the operator 
$$B(u)= -\big{(}F'(u) + 
a(t)I\big{)}^{-1}\big{(}F(u)+a(t)u - f\big{)}$$
is Lipschitz only if one assumes that 
$$\sup_{\{u:  
||u-u_0||\leq R\}}||F^{(j)}(u)||\leq M_j(R), \quad 0\leq j\leq 2.$$ 
This 
assumption was used in many cases in \cite{R499} and 
a bound  \eqref{eqbound} was established under suitable assumptions
in \cite{R499}.

There are many results (see, e.g., \cite{D},\cite{M} and references 
therein) concerning the properties 
and  global
existence of the solution to  \eqref{eqgen} if $-B(u)$ is a maximal 
monotone operator. However, even when $F$ is a monotone
operator, the operator $-B$ in the right-hand side of 
 \eqref{eq1} is not monotone. {\it Therefore these known results
are not applicable. Even the proof of local existence is an open 
problem if one makes only the following assumption}:

{\bf Assumption A): 

 $F$ is monotone and $F'(u)$ is continuous with respect to $u$}. 

The main result of this paper is a proof, apparently published for the 
first time,  that under 
Assumption A)  problem  \eqref{eq1} has a unique local solution
$u(t)$, and that under assumptions \eqref{eq2.2} on $a(t)$
(see below)   this local 
solution exists for all $t\geq 0$, so it is a global solution.
These results are formulated in Theorems 1.1 and 2.2.

Moreover, if the equation $F(y)=f$ has a solution and  $y$ is its (unique) 
minimal-norm solution, and if $\lim_{t\to \infty}a(t)=0$,  then 
there exists $u(\infty)$, and $u(\infty)=y$. This justifies the DSM
for solving the equation $F(u)=0$ with a monotone continuously Fr\'echet 
differentiable operator $F$, for the first time under the weak 
Assumption A). This result is formulated in Theorem 3.1.

Let us prove the existence of the solution to \eqref{eq1}.

Let 
\begin{equation}
\label{eq1.5}
\psi = F(u) +a(t)u - f := G(u,t).
\end{equation}
If $a(t)>0$ and $F$ is monotone and hemicontinuous, then it is known
(see, e.g., \cite{D}, p. 100) that the operator 
$F(u)+a(t)u$ is surjective. If $F'(u)$ is continuous, then, clearly, $F$ 
is hemicontinuous.
If $F$ is monotone and $a(t)>0$ then, clearly, the operator $F(u)+a(t)u$ 
is injective.
Thus, Assumption A) implies that the operator $F(u)+a(t)u$ is injective 
and surjective, it is continuously Fr\'echet differentiable,
as well as its inverse, so the map $u\mapsto F(u)+a(t)u$ is a 
diffeomorphism.
Therefore equation \eqref{eq1.5} is uniquely solvable for $u$ for any $\psi$, 
and the map $\psi=\psi(u)$ is a diffeomorphism. 
%Indeed, this map is continuously differentiable because $F$ is. 
The inverse map $u=g(\psi)$, is continuously differentiable 
by the inverse function theorem 
since the operator $\psi'_u=F'(u) + a(t)I$ is boundedly invertible if $a(t)>0$. 
Recall that $F'(u)\ge 0$, because $F$ is monotone. If 
$a(t)\in C^1([0,\infty))$ 
then the solution $u=u(t)$ of equation \eqref{eq1.5} 
is continuously differentiable with respect to $t$ 
(see \cite{R499}, p. 260-261), and if $u=u(t)$ is continuously 
differentiable with respect to $t$, 
then so is $\psi=\psi(u(t))$. 
The differentiability of $u=u(\psi(t))$ also follows from a consequence of the classical inverse function theorem 
(see, e.g., \cite{D}, Corollary 15.1, p. 150). 
Therefore, equation \eqref{eq1} can be 
written in an equivalent form as
\begin{equation}
\label{eq1.6}
\dot{\psi}(t) =  \dot{a}(t)u(t) - \psi(t):= G(t,\psi),\qquad \psi(0):=\psi(u_0).
\end{equation}
Since $u(t)=g(\psi(t))$ is continuously differentiable with respect to $\psi$, 
the map $\psi\mapsto G(t,\psi)$ is continuously differentiable with respect to $\psi$. Therefore, this map
 is Lipschitz, and local existence of the solution to problem \eqref{eq1.6} follows from the standard 
 result (see, e.g., \cite{R499}, p.247). 
Since the map $g(\psi)$ is continuously differentiable and $\dot{\psi}$ is a continuous function 
of $t$, the function $\dot{u}$ is a continuous function of $t$, and problem \eqref{eq1.6} is 
equivalent to problem \eqref{eq1}. We have proved the following theorem
\begin{theorem}
If Assumption A) holds, then problem \eqref{eq1} has a unique local 
solution. 
\end{theorem}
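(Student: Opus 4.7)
The plan is to reduce the implicit ODE \eqref{eq1} to an equivalent ODE whose right-hand side is locally Lipschitz in the unknown, and then invoke the standard Picard--Lindel\"of existence theorem. The natural change of unknown is $\psi(t) := F(u(t)) + a(t)u(t) - f$, because the quantity $F(u)+a(t)u$ appears inside the inverse in \eqref{eq1} and is, for each fixed $t$, a diffeomorphism of $H$ onto $H$ under Assumption A).

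First I would establish that, at any fixed $t$, the map $G(\cdot,t) : u \mapsto F(u)+a(t)u - f$ is a $C^1$-diffeomorphism. Surjectivity follows from the classical Browder surjectivity theorem applied to the monotone, coercive, hemicontinuous operator $F(\cdot)+a(t)(\cdot)$; hemicontinuity is automatic since $F'(u)$ exists and is continuous, and coercivity comes from the strongly monotone perturbation $a(t)I$ with $a(t)>0$. Injectivity follows from $\langle F(u)-F(w),u-w\rangle \ge 0$ together with $a(t)\|u-w\|^2 > 0$. The Fr\'echet derivative $\partial_u G = F'(u)+a(t)I$ is symmetric nonnegative plus $a(t)I$, hence boundedly invertible with $\|(F'(u)+a(t)I)^{-1}\| \le 1/a(t)$, so the inverse function theorem yields a $C^1$ inverse $g(\cdot,t)$; joint continuous dependence in $(t,\psi)$ comes from the continuity of $F'$ in $u$ and of $a,\dot a$ in $t$.

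Next I would transform the equation. Differentiating the identity $\psi(t)=F(u(t))+a(t)u(t)-f$ and substituting $\dot u$ from \eqref{eq1} gives
$\dot\psi = (F'(u)+a(t)I)\dot u + \dot a(t)u = -\psi + \dot a(t) u$,
which is exactly \eqref{eq1.6}, now viewed as $\dot\psi = \dot a(t)\,g(t,\psi) - \psi$ with initial condition $\psi(0)=F(u_0)+a(0)u_0-f$. Because $g$ is $C^1$ in $\psi$ and continuous in $t$, the right-hand side is locally Lipschitz in $\psi$ uniformly on compact time intervals, so Picard--Lindel\"of provides a unique local $C^1$ solution $\psi(t)$. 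Setting $u(t):=g(t,\psi(t))$ recovers a $C^1$ function $u$, and reversing the computation shows it satisfies \eqref{eq1}; uniqueness for $u$ follows from uniqueness for $\psi$ together with the bijectivity of $G(\cdot,t)$.

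The step that I expect to require the most care is the passage from the transformed equation back to \eqref{eq1}, because it relies on the chain rule $\dot u = \partial_t g + (\partial_\psi g)\dot\psi$ being valid with continuous pieces; this is precisely where one needs $a \in C^1$ and $F'$ continuous in $u$, matching Assumption A) exactly. The surjectivity argument is classical but worth citing explicitly, and the rest is bookkeeping around the inverse function theorem, so no further analytical obstacle should arise at the local stage; the global extension under conditions on $a(t)$ is deferred to the later theorem.
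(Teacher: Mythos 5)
Your proposal is correct and follows essentially the same route as the paper: the substitution $\psi = F(u)+a(t)u-f$, the Minty--Browder surjectivity plus strict monotonicity to get a diffeomorphism, the inverse function theorem for the $C^1$ inverse $g$, reduction to the Lipschitz ODE $\dot\psi = \dot a(t)g(t,\psi)-\psi$, and Picard--Lindel\"of. The only nitpick is that $F'(u)$ need not be symmetric (monotonicity only forces $\langle F'(u)h,h\rangle\ge 0$), but the bound $\|(F'(u)+a(t)I)^{-1}\|\le 1/a(t)$ and hence your argument survive unchanged.
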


In Section 2 we discuss existence of the global solution to problem \eqref{eq1}.

\section{Existence of the global solution}
\label{sec2}

Since $G(t,\psi)$ is Lipschitz with respect to $\psi$ and 
continuously differentiable 
with respect to $t$, the solution to \eqref{eq1.6} exists globally,
i.e., for all $t\ge 0$, if
\begin{equation}
\label{eq2.1}
\sup_{t\ge 0}\|\psi(t)\| \le c<\infty.
\end{equation}
If the solution $\psi$ to problem \eqref{eq1.6} exists globally, 
then the solution $u(t)$ to the equivalent problem \eqref{eq1} 
exists globally because the map 
$\psi\mapsto u$ is a diffeomorphism.

Let us prove \eqref{eq2.1} assuming that
\begin{equation}
\label{eq2.2}
0<a(t)<C,\quad \frac{1}{2}>\frac{|\dot{a}(t)|}{a(t)},\qquad t \ge 0,
\end{equation}
where $C>$ is a constant.

Denote $h(t):=\|\psi(t)\|$. 
Multiply both sides of \eqref{eq1.6} with $\psi(t)$ and get
\begin{equation}
\label{eq2.3}
h\dot{h} = -h^2 + \langle \dot{a}(t)u(t),\psi\rangle.
\end{equation}

Let $w(t)$ solve the equation:
\begin{equation}
\label{eq2.4}
F(w(t)) + a(t)w(t) -f =0,\qquad t\ge 0.
\end{equation}
Equation \eqref{eq2.3} implies 
\begin{equation}
\label{eq2.5}
\begin{split}
\dot{h} \le  -h + \|\dot{a}|\|u(t)-w(t)\| + |\dot{a}(t)|\|w(t)\|.
\end{split}
\end{equation}
We will prove later the following estimate
\begin{equation}
\label{eq2.6}
\|u(t) - w(t)\| \le \frac{h(t)}{a(t)},\qquad \forall t\ge 0. 
\end{equation}
If \eqref{eq2.6} holds, then \eqref{eq2.5} implies
\begin{equation}
\label{eq2.7}
\dot{h} \le - h\bigg{(}1 -\frac{|\dot{a}(t)|}{a(t)}\bigg{)} + |\dot{a}|\|w(t)\|
\le -\frac{h}{2} + |\dot{a}(t)|\|w\|.
\end{equation}
Therefore,
\begin{equation}
\label{eq2.8}
h(t) \le h(0) e^{-\frac{t}{2}} + e^{-\frac{t}{2}} \int_0^t e^{\frac{s}{2}}|\dot{a}(s)|\|w(s)\|ds,\qquad
\forall t\ge 0.
\end{equation}
From \eqref{eq2.8} and \eqref{eq2.2} one gets
\begin{equation}
\label{eq2.9}
h(t) \le h(0) e^{-\frac{t}{2}} + e^{-\frac{t}{2}} \int_0^t 
e^{\frac{s}{2}}\frac{a(s)}{2}\|w(s)\|ds,\qquad
\forall t\ge 0.
\end{equation}

Let us recall the following result
\begin{lemma}[Lemma 2, \cite{R554}]
\label{lemma2.1}
The function $a\|w_a\|$ is an increasing function of $a$ on $(0,\infty)$, where  
$w_a$ solves \eqref{eq2.4} with $a(t)= a=const>0$.
\end{lemma}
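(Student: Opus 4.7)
My plan is to prove the lemma by implicit differentiation with respect to $a$ and then exploit the nonnegativity of $F'(w_a)$ that comes from monotonicity. The cleanest target quantity is $a^2\|w_a\|^2$, since showing this is nondecreasing in $a$ immediately gives that $a\|w_a\|$ is nondecreasing.

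First I would justify that $a \mapsto w_a$ is well-defined and continuously differentiable. Existence and uniqueness of $w_a$ for each $a>0$ follow from Assumption A) exactly as argued in Section 1 for $G(u,t)$, since $F + aI$ is injective (monotone $F$ plus strictly positive $aI$) and surjective (standard result, cited already from \cite{D}). Differentiability in $a$ follows from the inverse function theorem applied to the map $w \mapsto F(w) + a w$, whose derivative $F'(w_a) + aI$ is boundedly invertible because $F'(w_a) \geq 0$ and $a > 0$.

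Next I would differentiate $F(w_a) + a w_a = f$ in $a$ to obtain
\begin{equation*}
(F'(w_a) + a I)\, \dot w_a + w_a = 0,
\qquad \text{i.e.,}\qquad
\dot w_a = -(F'(w_a) + a I)^{-1} w_a.
\end{equation*}
Then I compute
\begin{equation*}
\frac{d}{da}\bigl(a^2 \|w_a\|^2\bigr) = 2a\|w_a\|^2 + 2a^2 \langle w_a, \dot w_a\rangle
= 2a\|w_a\|^2 - 2a^2 \langle w_a, (F'(w_a)+aI)^{-1} w_a\rangle.
\end{equation*}

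The key estimate is now that monotonicity of $F$ forces $F'(w_a) \geq 0$ in the operator sense, so $F'(w_a) + aI \geq aI$, hence $(F'(w_a)+aI)^{-1} \leq \tfrac{1}{a} I$. This yields
\begin{equation*}
\langle w_a, (F'(w_a)+aI)^{-1} w_a\rangle \leq \tfrac{1}{a}\|w_a\|^2,
\end{equation*}
and plugging in gives $\frac{d}{da}(a^2\|w_a\|^2) \geq 2a\|w_a\|^2 - 2a\|w_a\|^2 = 0$. Therefore $a^2\|w_a\|^2$ is nondecreasing on $(0,\infty)$, and taking square roots gives that $a\|w_a\|$ is nondecreasing, which is the sense in which ``increasing'' is used here.

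I expect the only real friction point to be the justification of differentiability of $w_a$ in $a$ under the sole hypothesis that $F'$ is continuous (rather than, say, $C^1$); this is handled by the inverse function theorem argument already invoked earlier in the paper, and a monotonicity/continuity argument covers values of $a$ where $w_a = 0$ trivially. The algebraic core of the proof is the one-line bound $(F'(w_a)+aI)^{-1}\leq \frac{1}{a}I$, which is essentially the whole content of the lemma.
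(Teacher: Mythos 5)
Your argument is correct, but it takes a genuinely different route from the one the paper relies on: Lemma~\ref{lemma2.1} is simply quoted from \cite{R554}, and the proof there is purely algebraic and uses no differentiation. For $0<a<b$ one subtracts the identities $F(w_a)+aw_a=f$ and $F(w_b)+bw_b=f$, pairs the difference with $w_a-w_b$, and uses monotonicity once to get $a\|w_a\|^2+b\|w_b\|^2\le (a+b)\langle w_a,w_b\rangle\le (a+b)\|w_a\|\,\|w_b\|$, which factors as $\bigl(\|w_a\|-\|w_b\|\bigr)\bigl(a\|w_a\|-b\|w_b\|\bigr)\le 0$; if one had $a\|w_a\|>b\|w_b\|$, then (since $b\ge a$ and $\|w_b\|\ge 0$) also $\|w_a\|>\|w_b\|$, making the product positive --- a contradiction, so $a\|w_a\|\le b\|w_b\|$. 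That argument needs only continuity and monotonicity of $F$, which is why the lemma is available in \cite{R554}, where $F$ is not assumed Fr\'echet differentiable; your implicit-differentiation proof instead uses Assumption A) in an essential way. Within the present paper's setting your proof is valid, with one caveat: $F'(w_a)$ need not be self-adjoint, so you cannot simply invoke operator-monotonicity of inversion to pass from $F'(w_a)+aI\ge aI$ to $(F'(w_a)+aI)^{-1}\le a^{-1}I$. The quadratic-form inequality you need is nonetheless true and should be verified directly: with $h=(F'(w_a)+aI)^{-1}v$ one has $\|v\|\,\|h\|\ge\langle v,h\rangle\ge a\|h\|^2$, hence $\|h\|\le a^{-1}\|v\|$ and $\langle (F'(w_a)+aI)^{-1}v,v\rangle=\langle h,v\rangle\le\|h\|\,\|v\|\le a^{-1}\|v\|^2$. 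With that repair your computation gives $\frac{d}{da}\bigl(a^2\|w_a\|^2\bigr)\ge 0$, which is the nondecreasing monotonicity actually used in \eqref{eq2.10}; the cited proof is the more elementary and more general of the two.
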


From Lemma~\ref{lemma2.1}, inequality \eqref{eq2.2}, and inequality 
\eqref{eq2.9} one gets
\begin{equation}
\label{eq2.10}
h(t) \le h(0) e^{-\frac{t}{2}} + (1- e^{-\frac{t}{2}})C\|w_C\|,\qquad
\forall t\ge 0.
\end{equation}
Here $w_C$ solves equation \eqref{eq2.4} with $a(t)=C$. 

Therefore, estimate \eqref{eq2.1} is proved as soon as \eqref{eq2.6} is verified. 

Let us state our result and then prove \eqref{eq2.6}.

\begin{theorem}
\label{theorem2.1}
If Assumption A) and \eqref{eq2.2} hold, then problem \eqref{eq1} has a unique global solution. 
\end{theorem}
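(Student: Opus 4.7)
The plan is to deduce Theorem~\ref{theorem2.1} from the local existence result (Theorem~1.1) by establishing the a priori bound \eqref{eq2.1}, which is precisely what the excerpt has reduced the theorem to. Once $\sup_{t\ge 0}\|\psi(t)\|<\infty$ is known, the right-hand side of \eqref{eq1.6} is Lipschitz in $\psi$ on the relevant bounded set and continuous in $t$, so the local solution to \eqref{eq1.6} extends to all $t\ge 0$ by a standard continuation argument; the diffeomorphism $\psi\mapsto u$ then transports global existence back to \eqref{eq1}, and uniqueness propagates from the local statement.

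The single step still owed is estimate \eqref{eq2.6}. To prove it, I would subtract \eqref{eq2.4} from the defining relation $F(u(t))+a(t)u(t)-f=\psi(t)$ to obtain
\[
F(u)-F(w)+a(t)(u-w)=\psi.
\]
Taking the inner product with $u-w$ and using monotonicity of $F$, which gives $\langle F(u)-F(w),u-w\rangle\ge 0$, together with the Cauchy--Schwarz inequality, yields
\[
a(t)\|u-w\|^2\le \langle\psi,u-w\rangle\le \|\psi\|\,\|u-w\|=h(t)\|u-w\|.
\]
Dividing by $\|u-w\|$ (the trivial case $u=w$ being immediate) gives \eqref{eq2.6}.

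With \eqref{eq2.6} in hand, the remainder of the argument is already laid out in the excerpt: Cauchy--Schwarz applied to \eqref{eq2.3} yields \eqref{eq2.5}; inserting \eqref{eq2.6} and invoking $|\dot a|/a<1/2$ from \eqref{eq2.2} produces \eqref{eq2.7}; integrating this scalar linear differential inequality gives \eqref{eq2.8}; and the bound $|\dot a(s)|\le a(s)/2\le C/2$ combined with Lemma~\ref{lemma2.1} (so that $a(s)\|w(s)\|\le C\|w_C\|$) delivers the uniform bound \eqref{eq2.10}, which is \eqref{eq2.1}. The only real obstacle is \eqref{eq2.6} itself, and as the computation above shows, it is a one-line consequence of monotonicity; everything else is routine scalar ODE analysis together with the monotonicity of the map $a\mapsto a\|w_a\|$ supplied by Lemma~\ref{lemma2.1}.
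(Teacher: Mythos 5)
Your proposal is correct and follows essentially the same route as the paper: reduce global existence to the a priori bound \eqref{eq2.1}, prove \eqref{eq2.6} from monotonicity via the identity $F(u)-F(w)+a(t)(u-w)=\psi$ together with Cauchy--Schwarz, and then run the scalar differential inequality and Lemma~\ref{lemma2.1} exactly as in \eqref{eq2.5}--\eqref{eq2.10}. Your derivation of \eqref{eq2.6} is the same computation as the paper's \eqref{eq2.11}--\eqref{eq2.12}, just written with the inner product against $u-w$ made explicit.
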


Let us verify \eqref{eq2.6}.

Using \eqref{eqx12} one gets:
\begin{equation}
\label{eq2.11}
\langle F(u) - F(w) + a(u-w), u-w\rangle \ge 
a\|u-w\|^2.
\end{equation}
Thus,
\begin{equation}
\label{eq2.12}
\|u(t) - w(t)\| \le \frac{\|F(u(t)) - F(w(t)) + a(t)(u(t)-w(t)\|}{a(t)} = \frac{h(t)}{a(t)}.
\end{equation}
So \eqref{eq2.6} is verified and Theorem~\ref{theorem2.1} is proved.\hfill 
$\Box$

\section{Justification of the DSM}

By the justification of the DSM for solving equation
\begin{equation}
\label{eq3.1}
F(y) = f,
\end{equation}
we mean the proof of the following statements (see \cite{R499}, p. 1,
formulas (1.1.5)):
\begin{equation}
\label{eq3.2}
\exists ! u(t),\quad \forall t\ge 0;\quad \exists u(\infty); \quad F(u(\infty)) = f.
\end{equation}

In Theorem~\ref{theorem2.1} the first of these statements is proved. Let us assume
\begin{equation}
\label{eq3.3}
\lim_{t\to \infty}a(t) = 0,\qquad \forall t\ge 0,
\end{equation}
and prove the remaining two statements from \eqref{eq3.2}.

\begin{theorem}
\label{theorem3.1}
If Assumption A), \eqref{eq2.2} and \eqref{eq3.3} hold, and equation \eqref{eq3.1} has a 
solution, then \eqref{eq3.2} hold, and $u(\infty)=y$, where $y$ is the 
unique minimal-norm solution to \eqref{eq3.1}. 
\end{theorem}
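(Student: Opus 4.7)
The plan is to route the proof through the Tikhonov auxiliary $w(t)$ of Section~\ref{sec2}, where $F(w(t))+a(t)w(t)=f$: once $\|w(t)\|\le\|y\|$, $w(t)\to y$, and $h(t)=\|\psi(t)\|\to 0$ are established, one identifies $y$ as the unique admissible limit of $u(t)$.

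First I would show $\|w(t)\|\le\|y\|$ by subtracting $F(y)=f$ from $F(w)+a(t)w=f$, pairing with $w-y$, and using monotonicity; the strong convergence $w(t)\to y$ as $a(t)\to 0^+$ then follows from weak compactness of $\{w(t)\}$, demiclosedness of the continuous monotone $F$, and norm convergence. Next I would prove $h(t)\to 0$ by inserting $\|w(s)\|\le\|y\|$ into \eqref{eq2.9},
\[
 h(t)\le h(0)e^{-t/2}+\frac{\|y\|}{2}\,e^{-t/2}\int_0^t e^{s/2}a(s)\,ds,
\]
and splitting the integral at a point $T$ chosen so that $a(s)<\epsilon$ for $s\ge T$. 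Combined with \eqref{eq2.6}, this gives $a(t)\|u(t)-w(t)\|\le h(t)\to 0$, so $a(t)u(t)\to 0$ and $F(u(t))\to f$ strongly. Finally, pairing the identity $\psi=F(u)-F(y)+a(u-y)+ay$ with $u-y$ and using monotonicity yields $a(t)\bigl(\|u\|^2-\langle u,y\rangle\bigr)\le\langle\psi,u-y\rangle$; control of the ratio $h/a$ afforded by the strict inequality $|\dot a|/a<1/2$ lets this be rearranged into $\limsup_{t\to\infty}\|u(t)\|\le\|y\|$. Weak compactness then supplies a subsequential weak limit $u^*$, demiclosedness of the monotone $F$ applied to $F(u(t))\to f$ gives $F(u^*)=f$, and $\|u^*\|\le\|y\|$ forces $u^*=y$ by uniqueness of the minimal-norm solution. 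Since every weak subsequential limit equals $y$, we get $u(t)\rightharpoonup y$, and norm convergence upgrades this to strong convergence.

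The principal obstacle is this last step. The direct bound $\|u-w\|\le h/a$ from \eqref{eq2.6} has both numerator and denominator vanishing and so does not deliver $\|u-w\|\to 0$ by elementary estimation. One must instead extract the asymptotic inequality $\limsup\|u(t)\|\le\|y\|$, which rests on the strict inequality $|\dot a|/a<1/2$ in \eqref{eq2.2} to keep $h/a$ under control, and then identify the limit via the minimal-norm characterization of $y$ together with demiclosedness of monotone operators.
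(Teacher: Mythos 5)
Your proposal tracks the paper through all of the global estimates: the bound $\|w(t)\|\le\|y\|$, the convergence $w(t)\to y$ (which the paper simply cites as \eqref{eq3.4}), the uniform bound \eqref{eq3.5}, and the decay $h(t)\to 0$ (you deduce it from \eqref{eq2.9} by splitting the integral; the paper uses $\dot h\le -h+c|\dot a|$ together with L'Hospital's rule --- both work). Where you genuinely depart from the paper is in identifying the limit: the paper argues that $\psi(t)\to 0$, invokes the diffeomorphism $\psi\mapsto u$ and the known minimal-norm characterization of $\lim_{a\to0}w_a$, whereas you try to extract $\limsup_{t\to\infty}\|u(t)\|\le\|y\|$ directly and then conclude via weak compactness and demiclosedness.

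The step that fails is exactly the one you flag as ``the principal obstacle.'' From $\langle\psi,u-y\rangle\ge a\bigl(\|u\|^2-\langle u,y\rangle\bigr)$ you get $\|u\|^2-\langle u,y\rangle\le (h/a)\,\|u-y\|$, and to conclude $\limsup\|u\|\le\|y\|$ you need $h(t)/a(t)\to 0$. The strict inequality $|\dot a|/a<1/2$ in \eqref{eq2.2} does not deliver this. Integrating \eqref{eq2.7} and applying L'Hospital's rule gives only
\[
\limsup_{t\to\infty}\frac{h(t)}{a(t)}\;\le\;\limsup_{t\to\infty}\frac{|\dot a(t)|/a(t)}{\tfrac12-|\dot a(t)|/a(t)}\,\|y\|,
\]
which vanishes only if $|\dot a|/a\to 0$ --- a condition not contained in \eqref{eq2.2}--\eqref{eq3.3}. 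For example, $a(t)=e^{-t/4}$ satisfies both hypotheses, yet the available upper bound gives $\limsup h/a$ of order $\|y\|$, and your inequality then yields only something like $\limsup\|u\|\le 2\|y\|$, which does not force the weak limit points to be the minimal-norm solution. To close your argument you must either impose $\lim_{t\to\infty}|\dot a(t)|/a(t)=0$ (or a quantitative decay such as $a(t)=d(c+t)^{-b}$, as in \cite{R574}), or find a different mechanism forcing $\|u(t)-w(t)\|\to 0$. It is fair to add that the paper's own passage from $\psi(t)\to 0$ to $u(\infty)=y$ quietly relies on the same point, namely uniform-in-$t$ continuity of the inverse maps $\psi\mapsto u$ near $\psi=0$, which is again the ratio $h/a$; so the obstacle you identified is real, but the fix you propose does not overcome it.
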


\begin{proof}
It is known (see, e.g., \cite{R499}) that
\begin{equation}
\label{eq3.4}
\lim_{t\to\infty} w(t) = y.
\end{equation}
Inequality \eqref{eq2.1} implies
\begin{equation}
\label{eq3.5}
\sup_{t\ge 0}\|u(t)\| \le c,
\end{equation}
because the map $\psi\mapsto u$ is a diffeomorphism. 
Recall that by $c>0$ we denote various constants. 
Inequality \eqref{eq2.3} implies
\begin{equation}
\label{eq3.6}
\dot{h} \le -h + c|\dot{a}(t)|. 
\end{equation}
Assumptions \eqref{eq2.2} and \eqref{eq3.3} imply that
\begin{equation}
\label{eq3.7}
\lim_{t\to\infty} |\dot{a}(t)| = 0.
\end{equation}
From \eqref{eq3.6} and \eqref{eq3.7} it follows that
\begin{equation}
\label{eq3.8}
\lim_{t\to\infty} \psi(t) = 0.
\end{equation}
Indeed, \eqref{eq3.6} implies
\begin{equation}
\label{eq3.9}
h(t) \le h(0)e^{-t} + e^{-t}\int_0^t e^s |\dot{a}(s)|ds,\qquad t\ge 0.
\end{equation}
Clearly $\lim_{t\to\infty} h(0)e^{-t} = 0$, and the L'Hospital rule yeilds
\begin{equation}
\label{eq3.10}
\lim_{t\to\infty} \frac{\int_0^t e^s|\dot{a}(s)|ds}{e^t} = \lim_{t\to\infty}|\dot{a}(t)| = 0.
\end{equation}
Thus, \eqref{eq3.8} is proved. 

Since the map $\psi\mapsto u$ is a diffeomorphism, relation \eqref{eq3.8} 
implies the existence of the limit 
$u(\infty):=\lim_{t\to\infty}u(t)$ and the relation 
\begin{equation}
\label{eq3.11}
F(u(\infty)) = f,
\end{equation}
because $a(\infty) = 0$. 

Finally, $u(\infty) = y$, that is, $u(\infty)$ is the minimal-norm 
solution to equation \eqref{eq3.1}. 
Indeed, $u(\infty)$ is the limit of $u(t)$ where
\begin{equation}
\label{eq3.12}
F(u(t)) + a(t)u(t) -f = 0.
\end{equation}
It is proved in \cite{R499} that the limit, as $a\to 0$, of the solution 
$w$ to the following equation:
\begin{equation}
\label{eq3.13}
F(w_a) + aw_a -f =0,\qquad a>0,
\end{equation}
with a hemicontinuous monotone operator $F$
is the minimal-norm solution to the equation \eqref{eq3.1}, provided 
that equation \eqref{eq3.1} is solvable. 

Theorem~\ref{theorem3.1} is proved. 
\end{proof}

\bibliographystyle{amsplain}

\end{document}